\newtheorem{theorem}{Theorem}
\newcommand{\tabincell}[2]{\begin{tabular}{@{}#1@{}}#2\end{tabular}} 
\DeclareMathOperator*{\argmax}{argmax}
\theoremstyle{definition}
\begin{document}
%
% paper title
% Titles are generally capitalized except for words such as a, an, and, as,
% at, but, by, for, in, nor, of, on, or, the, to and up, which are usually
% not capitalized unless they are the first or last word of the title.
% Linebreaks \\ can be used within to get better formatting as desired.
% Do not put math or special symbols in the title.
\title{Impression Allocation and Policy Search\\in Display Advertising}

% author names and affiliations
% use a multiple column layout for up to three different
% affiliations

% \author{\IEEEauthorblockN{Michael Shell}
% \IEEEauthorblockA{School of Electrical and\\Computer Engineering\\
% Georgia Institute of Technology\\
% Atlanta, Georgia 30332--0250\\
% Email: http://www.michaelshell.org/contact.html}
% \and
% \IEEEauthorblockN{Homer Simpson}
% \IEEEauthorblockA{Twentieth Century Fox\\
% Springfield, USA\\
% Email: homer@thesimpsons.com}
% \and
% \IEEEauthorblockN{James Kirk\\ and Montgomery Scott}
% \IEEEauthorblockA{Starfleet Academy\\
% San Francisco, California 96678--2391\\
% Telephone: (800) 555--1212\\
% Fax: (888) 555--1212}}

% \author{Di Wu$^{\ast}$, Cheng Chen$^{\ast}$, Xiujun Chen$^{\ast}$, Junwei Pan$^{\mathsection}$, Xun Yang, Qing Tan, Jian Xu, Kuang-Chih Lee}

\author[1]{Di Wu\thanks{B.B@university.edu}}
\author[1]{Cheng Chen}
\author[1]{Xiujun Chen}
\author[2]{Junwei Pan}
\author[1]{Xun Yang}
\author[1]{Qing Tan}
\author[1]{Jian Xu}
\author[1]{Kuang-Chih Lee}

\affil[1]{Alibaba Group}
\affil[2]{Yahoo Research}

\affil[ ]{\{wudi.xjtu,chencheng1022,kongkongchen,pandevirus,yangxun1412,granttanqing\}@gmail.com}
\affil[ ]{\{xiyu.xj,kuang-chih.lee\}@alibaba-inc.com}

% conference papers do not typically use \thanks and this command
% is locked out in conference mode. If really needed, such as for
% the acknowledgment of grants, issue a \IEEEoverridecommandlockouts
% after \documentclass

% for over three affiliations, or if they all won't fit within the width
% of the page, use this alternative format:
% 
%\author{\IEEEauthorblockN{Michael Shell\IEEEauthorrefmark{1},
%Homer Simpson\IEEEauthorrefmark{2},
%James Kirk\IEEEauthorrefmark{3}, 
%Montgomery Scott\IEEEauthorrefmark{3} and
%Eldon Tyrell\IEEEauthorrefmark{4}}
%\IEEEauthorblockA{\IEEEauthorrefmark{1}School of Electrical and Computer Engineering\\
%Georgia Institute of Technology,
%Atlanta, Georgia 30332--0250\\ Email: see http://www.michaelshell.org/contact.html}
%\IEEEauthorblockA{\IEEEauthorrefmark{2}Twentieth Century Fox, Springfield, USA\\
%Email: homer@thesimpsons.com}
%\IEEEauthorblockA{\IEEEauthorrefmark{3}Starfleet Academy, San Francisco, California 96678-2391\\
%Telephone: (800) 555--1212, Fax: (888) 555--1212}
%\IEEEauthorblockA{\IEEEauthorrefmark{4}Tyrell Inc., 123 Replicant Street, Los Angeles, California 90210--4321}}

% use for special paper notices
%\IEEEspecialpapernotice{(Invited Paper)}

% make the title area
\maketitle

% As a general rule, do not put math, special symbols or citations
% in the abstract
\begin{abstract}
In online display advertising, guaranteed contracts and real-time bidding (RTB) are two major ways to sell impressions for a publisher. For large publishers, simultaneously selling impressions through both guaranteed contracts and in-house RTB has become a popular choice. Generally speaking, a publisher needs to derive an impression allocation strategy between guaranteed contracts and RTB to maximize its overall \textit{outcome} (\textit{e.g.}, revenue and/or impression quality). However, deriving the optimal strategy is not a trivial task, \textit{e.g.}, the strategy should encourage incentive compatibility in RTB and tackle common challenges in real-world applications such as unstable traffic patterns (\textit{e.g.}, impression volume and bid landscape changing). In this paper, we formulate impression allocation as an auction problem where each guaranteed contract submits virtual bids for individual impressions. With this formulation, we derive the optimal bidding functions for the guaranteed contracts, which result in the optimal impression allocation. In order to address the unstable traffic pattern challenge and achieve the optimal overall \textit{outcome}, we propose a multi-agent reinforcement learning method to adjust the bids from each guaranteed contract, which is simple, converging efficiently and scalable. The experiments conducted on real-world datasets demonstrate the effectiveness of our method.
\end{abstract}

% no keywords

% For peer review papers, you can put extra information on the cover
% page as needed:
% \ifCLASSOPTIONpeerreview
% \begin{center} \bfseries EDICS Category: 3-BBND \end{center}
% \fi
%
% For peerreview papers, this IEEEtran command inserts a page break and
% creates the second title. It will be ignored for other modes.
\IEEEpeerreviewmaketitle

\section{Introduction}
% no \IEEEPARstart
% This demo file is intended to serve as a ``starter file''
% for IEEE conference papers produced under \LaTeX\ using
% IEEEtran.cls version 1.8b and later.
% You must have at least 2 lines in the paragraph with the drop letter
% (should never be an issue)
% I wish you the best of success.

Online display advertising has become one of the most influential business, with \$59.8 billion revenue in FY 2019 in US alone \cite{iab-fy-2019}. 
%	As shown in Fig. \ref{fg:displayads}, typically when a user visits a publisher, e.g., a news website, there would be one or more ad impression opportunities generated in real time. Advertisers are able to acquire these opportunities to display their ads at certain costs and these costs eventually become the revenue of the publisher.
Typically when a user visits a publisher, e.g., a news website, there would be one or more ad impression opportunities generated in real time. Advertisers are able to acquire these opportunities to display their ads at certain cost and these cost eventually become the revenue of the publisher.

For a publisher, there are two major ways to sell impressions. The first one is through \emph{guaranteed contracts} (also referred as \emph{guaranteed delivery} \cite{chen2014dynamic}). A guaranteed contract is an agreement between an advertiser and a publisher by negotiating directly or by going through a programmatic guaranteed mechanism~\cite{chencombining}. The contract usually specifies the contract payment amount, the campaign duration and the desired number of ad impressions. The advertiser typically makes the payment before the ad delivery starts and the publisher guarantees the desired number of ad impressions. The publisher is also responsible for any shortfall in the number of impressions delivered. A penalty is usually incurred based on the volume of under-delivery.

The second way to sell impressions is through \emph{real-time bidding} (RTB). RTB allows advertisers to bid in real time for impressions and does not guarantee the impression volume for any advertiser~\cite{yuan2013real}. In this paper, we focus on the second price auction~\cite{edelman2007internet}. For each impression opportunity, the advertiser who offers the highest bid wins the opportunity to display her ad. The cost of the winner is the second highest bid in the auction.

%	\begin{figure} 
%		\centering
%		\includegraphics[width=0.22\textwidth]{}
%		\caption{An example of online display ads. Advertisers can show their ads on publishers' websites and/or apps to the target audience with certain cost (revenue to the publisher).}
%		\label{fg:displayads}
%	\end{figure}

Despite the increasing popularity of RTB, there is still large amount of the online display advertising revenue generated from guaranteed contracts. %\cite{FisherPG}. 
For large publishers, simultaneously selling impressions through both guaranteed contracts and in-house RTB has become a popular choice. These publishers need to derive impression allocation strategies between guaranteed contracts and RTB. There are two main considerations when publishers strategically allocate their impressions: 
\begin{enumerate}[align=right,leftmargin=0.17in]
	\item \textbf{Maximizing the overall outcome}: The overall \textit{outcome} we consider in this paper consists of both revenue and contract impression quality (a formal definition can be found in Section \ref{section_yo}). Revenue represents the short-term value to the publisher. It associates with the revenue from guaranteed contracts, the revenue from RTB, and the contract violation penalties. Impression quality (\textit{e.g.}, click-through rate) of the guaranteed contracts reflects the long-term revenue to the publisher since advertisers of guaranteed contracts are more and more concerned with the impression qualities. 
	
	\item \textbf{Maintaining incentive compatibility in RTB}: The publisher also needs to take care of the auction mechanism of its in-house RTB. An incentive compatible auction mechanism\footnote{A mechanism is called incentive-compatible if every participant can achieve the best \textit{outcome} to themselves just by acting according to their true preferences.} is essential to facilitate truthful bidding and maximize the efficiency.% \cite{nisan2007algorithmic}.
\end{enumerate}

%	It is worth noting that our problem setting is fundamentally different from that of \cite{balseiro2014yield}. In \cite{balseiro2014yield}, the publisher connects to a third-party ad exchange to accept real-time bids and the problem is deriving the optimal reserve price for each impression. However, we argue that it is more and more prevalent that large publishers have their own in-house RTB to operate both supply (their traffic) and demand (advertisers) simultaneously. There is, however, no existing study to derive the optimal impression allocation strategies for these publishers. 

%	\begin{figure} 
%		\centering
%		\includegraphics[width=0.4\textwidth]{}
%		\caption{The traffic volume and market price distribution of two consecutive days in a large advertising platform.}
%		\label{fg:unstable_traffic}
%	\end{figure}

Besides the considerations above, what makes it even more challenging is the unstable traffic pattern in real-world applications. Strategies derived based on a known collection of impressions \cite{balseiro2014yield,ghosh2009bidding,jauvion2018optimal} or stochastic arrival models \cite{vee2010optimal,feldman2010online,devanur2009adwords,devanur2011near} may be inferior in such scenarios. First, the traffic volume is vulnerable to unexpected changes such as holiday sale events. Second, usually concurrent with the traffic volume changes, the market price distribution of the impressions can also deviate from the empirical one. Finally, the unpredictable advertiser behaviors in RTB including modifying budget, bid, and target audience can also make the traffic pattern complicated and dynamic \cite{wu2018budget}. 
%	Fig. \ref{fg:unstable_traffic} is an example of the traffic pattern changes observed in a large advertising platform. 

To derive an optimal impression allocation strategy that takes into account the above-mentioned considerations and challenges, we propose to analyze the problem from another perspective. Since the allocation is non-trivial and the environment is highly dynamic, can the guaranteed contracts also participate in the real-time auctions so that they can also enjoy the liquidity and the impressions can be fully auctioned? More specifically, can each guaranteed contract be treated as a bidding agent which is able to submit bids for individual impressions and the impression allocation is based on the submitted bids from both guaranteed contracts and RTB? We will show that such a setup can actually lead us to the optimal impression allocation strategy.
% For each contract $j$, all we need is to solve an optimal parameter $\alpha_j$ to calculate bids for all impressions. 
For each impression, the optimal bids of each contract bidding agent $j$ are determined by a critical parameter $\alpha_j$, and the impression can be optimally allocated based on an auction mechanism with the bids from RTB and contract agents. It can be proved that the proposed mechanism results in the theoretically maximum \textit{outcome} (more details see Section \ref{section_yo}). Meanwhile, the incentive compatibility, which is important for auction efficiency, is also maintained.

% Meanwhile, the optimal strategy also possesses two appealing properties: 1) the RTB part is incentive compatible,  and 2) the optimal bid from a guaranteed contract only depends on the impression quality and the under-delivery penalty, regardless of the RTB status. 

However, although the impressions can be allocated in such scheme, the optimal $\alpha_j$ can only be obtained over a complete impression set. Hence, an $\alpha_j$ adjustment policy is a necessity in real-world application, which aims to continuously adjust $\alpha_j$ to the optimal one under the current state (\textit{i.e.}, indicators about contracts' fulfillment status and RTB information). Meanwhile, since each agent $j$ has its own parameter $\alpha_j$, and all the agents have a common goal of \textit{outcome} maximization, it is intuitive to apply multi-agent reinforcement learning (MARL) to model this process. However, MARL method is faced with some common challenges in industrial scenario, such as model complexity, scalability and low converging efficiency.
Thus, we carefully design our modeling process and present a simple, scalable, quickly converging MARL method.

To evaluate the effectiveness of our method, we conducted experiments on large-scale real-world datasets. Compared with other methods, we observed substantial improvements on impression allocation results. Meanwhile, the converging efficiency, scalability of our MARL method are also verified empirically. 
Our main contributions can be summarized as follows:
%	\vspace{-1mm}
\begin{enumerate}[align=right,leftmargin=0.15in]
	\item We propose an optimal impression allocation strategy in display advertising with both guaranteed contracts and in-house RTB. The proposed strategy maintains incentive compatibility in RTB.
	\item We devise an industrial applicable MARL method to dynamically optimize impression allocation \textit{outcome}, which is simple, scalable and quickly converging.
	\item Empirical studies on real-world industrial dataset demonstrate the scalability and efficiency of our MARL method.
	%		\item Our proposed method also relieves some common challenges in MARL including input dimension explosion and non-stationary environment caused by the change of agent policy. 
\end{enumerate}

The rest of this paper is organized as follows. The optimal impression allocation strategy is derived associated with optimal bidding function in Section \ref{section_yo}. In Section \ref{MARLIA}, we present an efficient MARL method to learn the crucial parameter in bidding function. Empirical study is shown in Section \ref{section:exp}, followed by the related work in Section \ref{relatedwork}. We conclude the paper in Section \ref{conclusion}.

\section{Optimal Impression Allocation} \label{section_yo}

\begin{figure}
	\centering
	\includegraphics[width=0.45\textwidth]{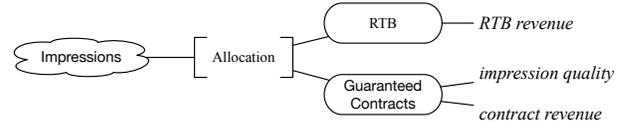}
	\caption{Impression allocation process of a publisher.}
	\label{fg:hy_method_framework}
\end{figure}

For the publishers who sell impressions through both guaranteed contracts and in-house RTB, one of their impression allocation motivations is to maximize the total revenue from both contracts and RTB. Meanwhile, the impression quality of the contracts can affect the satisfaction of the contract advertisers and therefore affect the long-term revenue. 
As illustrated by Fig. \ref{fg:hy_method_framework}, 
the ultimate goal of impression allocation is to maximize the overall \textit{outcome}, \textit{i.e.}, simultaneously maximize RTB revenue
%\footnote{Although revenue maximization usually plays an important role in impression monetization, it is also prevalent to maximize the RTB social welfare during impression allocation. Fortunately, our framework is easy to address this challenge with small modifications, and details will be presented in Appendix \ref{sw_m}.}
, contract revenue and contract impression quality. Besides, whether the RTB part is incentive compatible is also a critical issue to be considered. 
%	The frequently used notations are summarized in Table \ref{table:notation1}.

\subsection{Problem Formulation} \label{section_pf}

Suppose there are $n$ impressions indexed by $i$ to be allocated by the publisher. On the one hand, suppose that there are $m$ guaranteed contracts indexed by $j$ to be served. For each contract $j$, let $d_j$ be the demand impression volume and $c_j$ be the unit price of each impression, so the prepaid contract revenue is $c_jd_j$. Suppose the contract violation penalty of contract $j$ for each undelivered impression is $p_j$, that is, if the number of impressions served to contract $j$ is fewer than $d_j$, the publisher will have to be responsible for the under-delivery and the penalty is $p_j$ for each undelivered impression. On the other hand, for each impression $i$, RTB will also provide a list of bids, of which we are mostly interested in the first and second highest bids $\mathbbm{b}_{i1}$ and $\mathbbm{b}_{i2}$ under the second price auction mechanism. If the impression is allocated to RTB, then the publisher will earn $\mathbbm{b}_{i2}$. Let $x_{ij}$ be the binary indicator whether an impression $i$ is allocated to contract $j$ and obviously if $\sum_j{x_{ij}=0}$ then the impression is allocated to RTB. Then we are able to derive revenue from both guaranteed contracts and RTB. 
Specifically, the revenue from guaranteed contracts is $R_{GC} = \Sigma_j c_jd_j - \Sigma_j p_jy_j$, where $y_j=d_j-\Sigma_i x_{ij}$ is the impression under-delivery amount and the revenue from RTB is $R_{RTB}=\Sigma_i (1-\Sigma_j x_{ij})\mathbbm{b}_{i2}$. For contract business, a publisher should consider the quality of the impressions allocated to contracts, since the result of poor impression quality (\textit{e.g.}, low CTR) would drastically affect the future investments on contract ads. 
In industrial application, it is common to introduce a trade-off parameter between the units of impression quality and revenue \cite{balseiro2014yield}.
%\footnote{Although the publisher could only maximize revenue, \textit{i.e.} allocates the cheapest impressions (usually with low quality) to contracts, it usually means great harm to the business, making contract buyers drastically update their stance in future contracts.}.
Thus, we let $q_{ij}$ be the impression $i$'s quality for contract $j$ and $\lambda_j$ be the quality weight\footnote{The quality weight can be also considered as a scaling parameter that converts quality into money.}, then the total contract impression quality is $Q_{GC} =\Sigma_{ij} \lambda_jx_{ij}q_{ij}$.

Our goal of impression allocation is to maximize the overall \textit{outcome}, \textit{i.e.}, the sum of $R_{GC}, R_{RTB}$, and quality $Q_{GC}$. One may curious about the reason why we do not consider the quality of the RTB part, this is mainly because the RTB part is fully auctioned and the impression quality is already considered in the bids of each advertiser in RTB, \textit{e.g.}, if an advertiser considers click as the quality of an impression, he or she can only bid with a click price. Hence, we formulate the problem of \emph{optimal impression allocation for outcome maximization} as the following linear programming problem:

\begin{equation} \tag{LP1} \label{lp1}
	\begin{aligned}
		\underset{x_{ij},y_j}{\text{maximize}}  & & R_{GC} + R_{RTB} + Q_{GC} & & & \\
		\text{s.t.} & & \Sigma_i x_{ij} + y_{j} = d_j,\quad &&&\forall j, \nonumber\\
		& & \Sigma_j x_{ij} \le 1,\quad &&& \forall i, \nonumber\\
		& & x_{ij} \ge 0,\quad &&& \forall i,j, \nonumber\\
		& & y_{j} \ge 0,\quad &&& \forall j. \nonumber
	\end{aligned}    
\end{equation}

\subsection{The Optimal Allocation Strategy} \label{section_oas}
As demonstrated above, the optimal impression allocation can be obtained by solving \eqref{lp1} if the impressions are completely foreseen. However, the impression is unknown until the end of the day and almost impossible to predict since the unstable traffic pattern. Therefore we propose to solve the problem from a different perspective. That is, guaranteed contracts are regarded as bidding agents\footnote{Different from the bidding agents in RTB, the contract bidding agents do not have budget, and the delivered bids are only used for the impression allocation.}, whose bids are decided by the publisher, and participate in the real-time auctions together with the RTB ads. The impression allocation is finally decided by the auction results of RTB and contracts. In this section, we prove that such a setup can actually lead us to
the optimal allocation strategy and derive the optimal bidding functions of contract bidding agents.

\begin{theorem} \label{theorem}
	Suppose for each impression $i$, every contract $j$ submits a bid $b_{ij}$. Let $k=\argmax_j b_{ij}$. Consider the allocation strategy  that allocates impression $i$ to contract $k$ if $b_{ik} > \mathbbm{b}_{i2}$ and otherwise to RTB. The strategy results in the optimal solution of outcome maximization problem defined by \eqref{lp1} if the bid from each contract is:
	\label{algo:prove}
	\begin{equation} \label{eq:optimal_f}
		b_{ij} = \lambda_j\ q_{ij} + \alpha^{*}_{j},\ \forall i,j,
	\end{equation}

\noindent
where $\alpha^{*}_{j}$ is the optimal solution to the dual problem of outcome maximization problem (LP1) and $\alpha^{*}_{j} \leq p_j$ .
%is the optimal solution\footnote{If $\alpha_j^*> p_j$, the impression allocated to contract bidder $j$ will cause a negative outcome increase, since the maximum outcome of impression $j$ if it is allocated to contract $j$ is $\lambda_j\ q_{ij} + p_j$.} to the dual problem of outcome maximization problem \eqref{lp1}.

% For each impression $i$, if we bid with $b_{ij} = \lambda_j q_{ij} + \alpha_{j}^*$ for each contract, and allocate impressions with the strategy demonstrated in Eq. \eqref{allocate}, then we achieve the the optimal allocation results of \eqref{lp1}:

% \begin{equation}\label{allocate}
% 	\text{allocate to }
% 	\begin{cases}
% 		\text{ RTB} &  {\text{if }  \forall j, b_{ij} < \mathbbm{b}_{i2}}, \\
% 		\text{ contract }k & {\text{otherwise}},
% 	\end{cases} 
% \end{equation}
% where $k = \arg\max_j {b_{ij}}$.

\end{theorem}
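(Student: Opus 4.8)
The plan is to prove optimality through linear-programming duality and complementary slackness, reading the parameter $\alpha_j^*$ together with a per-impression quantity $\beta_i^*$ as the optimal dual variables of the demand and supply constraints of \eqref{lp1}. First I would strip the constants from the objective: substituting $R_{GC}$, $R_{RTB}$, $Q_{GC}$ and expanding $R_{RTB}=\Sigma_i \mathbbm{b}_{i2}-\Sigma_{ij} x_{ij}\mathbbm{b}_{i2}$, the objective becomes $\Sigma_j c_j d_j+\Sigma_i \mathbbm{b}_{i2}+\Sigma_{ij}(\lambda_j q_{ij}-\mathbbm{b}_{i2})x_{ij}-\Sigma_j p_j y_j$. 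Since $\Sigma_j c_j d_j$ and $\Sigma_i \mathbbm{b}_{i2}$ are constants, \eqref{lp1} is equivalent to maximizing $\Sigma_{ij}(\lambda_j q_{ij}-\mathbbm{b}_{i2})x_{ij}-\Sigma_j p_j y_j$ over the same polytope.

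Next I would form the dual. Assigning a free multiplier $\alpha_j$ to each demand equality $\Sigma_i x_{ij}+y_j=d_j$ and a nonnegative multiplier $\beta_i$ to each supply inequality $\Sigma_j x_{ij}\le 1$, the dual constraints attached to $x_{ij}$ and $y_j$ read $\beta_i\ge \lambda_j q_{ij}+\alpha_j-\mathbbm{b}_{i2}$ and $\alpha_j\le p_j$. The first is exactly $\beta_i\ge b_{ij}-\mathbbm{b}_{i2}$ for the bid $b_{ij}=\lambda_j q_{ij}+\alpha_j$ of \eqref{eq:optimal_f}, and the second is precisely the claimed bound $\alpha_j^*\le p_j$, so that inequality falls out of dual feasibility for free. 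Minimality of the dual objective in $\beta_i$ then forces $\beta_i^*=\max\{0,\ \max_j(b_{ij}-\mathbbm{b}_{i2})\}=\max\{0,\ b_{ik}-\mathbbm{b}_{i2}\}$ with $k=\argmax_j b_{ij}$.

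I would then recover the allocation from complementary slackness against the optimal dual $(\alpha^*,\beta^*)$ and check every KKT condition. When $b_{ik}>\mathbbm{b}_{i2}$ we have $\beta_i^*=b_{ik}-\mathbbm{b}_{i2}>0$, so the supply constraint is tight, $\Sigma_j x_{ij}^*=1$, and the slackness $x_{ij}^*>0\Rightarrow\beta_i^*=b_{ij}-\mathbbm{b}_{i2}$ permits mass only on arg-max contracts; this is exactly the rule $x_{ik}^*=1$. When $b_{ik}\le \mathbbm{b}_{i2}$ we have $\beta_i^*=0$, every $x$-constraint is slack, and the impression is sent to RTB, again matching the rule. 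Setting $y_j^*=d_j-\Sigma_i x_{ij}^*$ closes the demand equalities, and once the remaining condition $y_j^*>0\Rightarrow\alpha_j^*=p_j$ is verified, all primal-dual slackness holds and optimality follows by strong duality.

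The hard part will be primal feasibility of the induced allocation, namely that no contract is over-delivered ($\Sigma_i x_{ij}^*\le d_j$, i.e.\ $y_j^*\ge 0$) and that under-delivery occurs only at the saturated price $\alpha_j^*=p_j$. This is where the \emph{optimality} of $\alpha^*$, not merely its feasibility, is indispensable: the optimal $\alpha^*$ must clear the market so that the auction hands each contract at most its demand. In the generic case where the arg-max $k$ is unique at every winning impression, any complementary-slack optimal primal is integral with support $\{(i,k)\}$ and therefore coincides with the strategy's allocation, so feasibility is inherited from the known LP optimum; the degenerate ties, where several contracts share the top bid, require only a consistent tie-break (or fractional split) reproducing the optimal $y_j^*$, which I would treat as a separate, routine case.
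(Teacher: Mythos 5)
Your proposal follows essentially the same route as the paper's own proof: drop the constant terms $\Sigma_j c_j d_j$ and $\Sigma_i \mathbbm{b}_{i2}$, form the dual with multipliers $\alpha_j$ for the demand equalities and $\beta_i \ge 0$ for the supply constraints (which yields $\alpha_j \le p_j$ and $\beta_i \ge b_{ij}-\mathbbm{b}_{i2}$), and then read the allocation rule off complementary slackness between the optimal primal $x^*,y^*$ and the optimal dual $\alpha^*,\beta^*$. Your additional attention to primal feasibility, over-delivery, and tie-breaking is more careful than the paper, which handles only the strict-inequality cases and leaves the degenerate ones implicit, but it does not change the underlying argument.
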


% For page limitation, the proof of Theorem \ref{theorem} is presented in supplementary materials.
\begin{proof}
	The maximal revenue from RTB $\Sigma_i\mathbbm{b}_{i2}$, along with the total payment from contracts $\Sigma_jc_jd_j$ can be considered as constants. Then, $\text{maximize} \ \Sigma_{ij}(1-x_{ij})\mathbbm{b}_{i2} + \Sigma_jc_jd_j - \Sigma_j y_j p_j + \Sigma_{ij} \lambda_jx_{ij}q_{ij}$ can be simplified as $\text{maximize}\ -\Sigma_{ij}x_{ij}\mathbbm{b}_{i2} - \Sigma_j y_j p_j + \Sigma_{ij} \lambda_jx_{ij}q_{ij}$. Thus, the dual problem of \eqref{lp1} is as follows:
	
	\begin{align}
		\underset{\alpha_j,\beta_i}{\text{minimize}} & & \Sigma_{i}\beta_{i} -\Sigma_{j} \alpha_j d_{j} & & & \tag{LP2}\label{lp2} \\
		\textup{s.t.} &  & \lambda_j q_{ij} + \alpha_{j}  - \mathbbm{b}_{i2} \le \beta_i,  & & &  \forall i,j, \label{lp2_1} \\
		& & \alpha_j \le p_{j},  & & &  \forall j, \nonumber\\ 
		% 	& & \alpha_j \ge 0,  & & & \forall j, \nonumber\\
		& & \beta_i \ge 0,  & & & \forall i. \nonumber
	\end{align}
	
	Suppose the optimal solution to \eqref{lp1} and \eqref{lp2} are $x^*_{ij}, y_j^*$ and $\beta^*_i, \alpha^*_j$ respectively. We denote impression $i$'s bid of contract $j$ as $b_{ij} = \lambda_j q_{ij} + \alpha_{j}^*$. According to the complementary slackness theorem,  we have
	
	\begin{align}
		x_{ij}^* \ (b_{ij} - \mathbbm{ 	b}_{i2} - \beta_i^*) &= 0,     & & \forall i,j, \label{dual1}\\ 
		(\Sigma_j {x_{ij}^*} -1) \ \beta_i^* &= 0, & & \forall i. \label{dual2}
	\end{align}
	
	\noindent
	$\forall i=1..n; j=1..m$:
	\begin{itemize}
		\item If $b_{ij} < \mathbbm{b}_{i2}$ then $b_{ij} - \mathbbm{b}_{i2} - \beta_{i}^* < 0$. Based on Eq. \eqref{dual1} we can infer that $x_{ij}^* = 0$. Therefore if $\forall j, b_{ij} < \mathbbm{b}_{i2}$ then  $\sum_j{x_{ij}^* = 0}$, which means impression $i$ is allocated to RTB in the optimal solution of \eqref{lp1}.
		
		\item If $b_{ij} > \mathbbm{b}_{i2}$, we can infer that $\beta_i^* > 0$ and $\sum_j{x_{ij}^* = 1}$ according to Eqs. \eqref{lp2_1},\eqref{dual2}, which means impression $i$ is allocated to guaranteed contracts. Let $x_{ik}^* > 0$, then we have $b_{ik} = \mathbbm{b}_{i2} + \beta_{i}^*$ based on Eq. \eqref{dual1}. Therefore $b_{ij} \le \mathbbm{b}_{i2} + \beta_{i}^* = b_{ik}$, \textit{i.e.}, $k = \arg\max_j {b_{ij}}$. This means in the optimal solution of \eqref{lp1}, the impression is allocated to the contract with highest bid.
	\end{itemize}
	
	In summary, for each impression $i$, if we bid with $b_{ij} = \lambda_j q_{ij} + \alpha_{j}^*$ for each contract $j$, and allocate impressions with the strategy demonstrated in Eq. \eqref{allocate}, then we achieve the the optimal allocation results of \eqref{lp1}:
	
	\begin{equation}\label{allocate}
		\text{allocate to }
		\begin{cases}
			\text{ RTB} &  {\text{if }  \forall j, b_{ij} < \mathbbm{b}_{i2}}, \\
			\text{ contract }k & {\text{otherwise}},
		\end{cases} 
	\end{equation}
	where $k = \arg\max_j {b_{ij}}$.
\end{proof}

\subsection{Incentive Compatibility of RTB}\label{IC_prove}

Since simultaneously selling impressions through both guaranteed contracts and in-house RTB has become a popular choice for large publishers, the incentive compatibility of RTB should be seriously considered. If the auction mechanism in RTB is incentive compatible, truthful bidding strategy is facilitated for the advertisers who participate in the auction, and it will result in a locally envy-free equilibrium which maximizes the efficiency \cite{edelman2007internet}. 
% It can be proved that the proposed allocation strategy is  incentive compatible. 
% The proof is presented in supplementary materials.

According to the allocation strategy presented in Section \ref{section_oas}, for each impression $i$, we simply compare the max bid from contracts with the second highest bid from RTB, and then allocate the impression to the one with the higher bid.
From the perspective of any RTB bidder, the social choice function is monotone in every bid submitted by a RTB bidder, \textit{i.e.}, the impression allocation result (\textit{True} or \textit{False}) is monotone with respect to her bids, and the critical value (payment) for the winning RTB bidder is still the second highest price $\mathbbm{b}_{i2}$. According to the Theorem 9.36 in \cite{nisan2007algorithmic}, the optimal allocation strategy introduced in Section \ref{section_oas} will keep the incentive compatibility of RTB.

\section{A Practical MARL Method} \label{MARLIA}
Recall that all the guaranteed contracts are considered as bidders (publisher proxies) to compete with RTB bidders in the auction. 
The optimal bidding function of contracts is given by Eq.~\eqref{eq:optimal_f}, and the remaining problem of impression allocation is the determination of the optimal bidding parameter $\alpha^*_j$. Since the traffic pattern is unstable and the impression set is unknown until the end of the day, it is impossible to obtain the optimal $\alpha_j^*$ by solving the dual problem of \eqref{lp1}. Thus, in the real-world scenario, we have to adjust the current (not optimal) $ \alpha_j$ according to the current state (\textit{e.g.}, contract fulfillment rate, cost per mille in RTB, etc.) to maximize the potential \textit{outcome}. 
% Therefore, the crucial problem is how to adjust $\alpha_j$. 
In this section, we first formulate this $\alpha_j$ adjustment problem as a Markov Game, then we simplify the policy searching process based on an important property in our scenario, last we present an industry-applicable parameter adjustment approach via multi-agent reinforcement learning (MARL).  
%		Recall that all the guaranteed contracts are considered as bidders (publisher proxies) to compete with RTB bidders in the auction. The optimal bidding function of contracts is given by Eq.~\eqref{eq:optimal_f}. The remaining problem of impression allocation is the setting of bidding parameter $\alpha^*_j$ in Eq. \eqref{eq:optimal_f}. Since the traffic pattern is unstable and the impression set is not complete until the end of the day, it is impossible to directly solve \eqref{lp2} to obtain the $\alpha_j^*$. Thus, in real application, an initial $\alpha_j$ is usually decided at the beginning of a day according to the historical impression data, and then the $\alpha_j$ is continuously modified according to the current state (contract fulfillment rate, cost per mille in RTB, etc.) to maximize potential \textit{outcome}. It is intuitive to model this process via multi-agent reinforcement learning (MARL), in which each agent $j$ learns a policy of modifying $\alpha_j$, and the goal of all agents is to maximize the global \textit{outcome}. 
%	However, in industrial scenario, it is seldom to apply MARL methods due to some common challenges, such as modeling complexity, non-stationary environment, converging efficiency, etc. 
%	In this section, we carefully design the modeling process and present a simple, quickly converging MARL method, which is practical in industry and delivers satisfying result in real-world datasets.

\subsection{Formulation as A Markov Game} \label{markovGame}
% According to the above presentation, contracts should modify $\alpha$ based on the current state, which is a reflection of the auction environment. Further, the modified $\alpha$ result in bids for each impression, and the auction process decides how many impressions are allocated to each contract. During the interaction, contracts should learn the optimal modification pattern of $\alpha$. It is intuitive to formulate impression allocation problem as Markov Game, where there are a group of agents sharing a common environment. At each time step $t$, each agent $j$ receives its own individual observation $o_{j,t}$ of environment state $s_t$,  and then delivers an action  $\delta_{j,t}$ based on each individual policy $\pi_j$. The environment returns an immediate reward $r_{j,t}$ to agent $j$ and changes to next state $s_{j,t+1}$. Similar to RL, each agent $j$ tries to optimize its cumulative discounted reward $\sum_{t=1}^T\gamma^{t-1}r_{j,t}$. 
% 
%Similar to RL, each agent $j$ tries to optimize its cumulative discounted reward $\sum_{t=1}^T\gamma^{t-1}r_{j,t}$. 
%
%In industrial scenario, Markov Game is usually difficult to solve due to some common challenges, such as modeling complexity, non-stationary environment, converging efficiency. Fortunately, in our problem there is an important property that can simplifie the problem and make it applicable. Therefore we present the simplified Markov Game before introducing the details of Markov Game. 

We formulate the impression allocation process as an Markov Game, where there are $m$ contract agents and each agent $j \in [1..m]$ adjusts its $\alpha_j$ in order to achieve the global \textit{outcome} maximization. A Markov game is defined by a set of states $\mathcal{S}$ describing the status of impression allocation, a set of observations $\mathcal{O}_j$ indicating the observed information of the current state from the perspective of agent $j$, a set of actions $\mathcal{A}_1, ... , \mathcal{A}_m$ where $\mathcal{A}_j\subseteq \mathbb{R}$ represents the action space of agent $j$. At each time step $t$, each agent $j$ observes $o_{j,t}$ and then delivers an action $\delta_{j,t}$ to adjust $\alpha_{j,t}$ according to its policy $\pi_j:\mathcal{O}_j\mapsto\mathcal{A}_j$. Then, the state transfers to a next state according to the state transition dynamics $\mathcal{T}:\mathcal{S}\times\mathcal{A}_1\times\mathcal{A}_2\times ... \times \mathcal{A}_m\mapsto\Omega(\mathcal{S})$ where $\Omega(\mathcal{S})$ is a collection of probability distributions over $\mathcal{S}$. The environment returns an immediate reward to each agent based on a function of current state and all the agents' actions as $r_{j,t}:\mathcal{S}\times\mathcal{A}_1\times\mathcal{A}_2\times ... \times \mathcal{A}_m\mapsto\mathcal{R}$. The goal of each agent is to maximize its the total expected return $R_j=\sum_{t=1}^T\gamma^{t-1}r_{j,t}$ where $\gamma$ is a discount factor and $T$ is the time horizon. The detailed information can be found as below:

\begin{itemize}
	\item [$\mathcal{O}_j$:] The observation $o_{j,t}$ of agent $j$ at time step $t$ should in principle reflect the contract status, which mainly includes the following three parts: first, the time information, which tells the agent the current stage of the impression allocation process; second, the contract information, including demand fulfillment status and speed of contract $j$; third, the context information of other contracts' fulfillment status, which facilitates the global optimization.
	
	\item [$\mathcal{A}_j$:] At time step $t$, each agent $j$ delivers action $\delta_{j,t}\in \mathcal{A}_j$ to modify $\alpha_{j,t}$ to $\alpha_{j,t+1}$, typically taking the form of $\alpha_{j,t+1} = \min\{\alpha_{j,t} + \delta_{j,t}\cdot p_j,\ p_j\}$, where $p_j$ is the unit under-delivery penalty of contract $j$.
	
	%	\item [$r_{j,t}$:] TODO The \textit{outcome} produced by the impression allocation strategy introduced in Section \ref{section:subproblem} on the rest of the impression set after time step $t$ with $\alpha_{j,t}$. In model training process, the reward could be normalized by the theoretical optimum \textit{outcome} which could be obtained via solving the \eqref{lp2} on the entire dataset.
	
	\item [$r_{j,t}$:] Since the goal of impression allocation is to maximize the $outcome$ presented in Section \ref{section_pf}, the reward is defined as the resulting $outcome$ of the auction between time step $t$ and $t+1$. Specifically, let the impression set between time step $t$ and $t+1$ be $\mathcal{I}$, then $r_{j,t} = \Sigma_k (1-\Sigma_j x_{kj})\mathbbm{b}_{k2} + \Sigma_{kj} \lambda_jx_{kj}q_{kj} -  \mathbbm{1}_{t=T}\Sigma_j p_j y_j,\ k \in \mathcal{I}$.
	
	%	${R'}_{GC} = \Sigma_j c_j(d_j-\Sigma_j x_{kj}) - \Sigma_j p_j(y_j-\Sigma_j x_{kj})$, the revenue from RTB as ${R'}_{RTB}=\Sigma_i (1-\Sigma_j x_{ij})\mathbbm{b}_{i2}-\Sigma_k (1-\Sigma_j x_{kj})\mathbbm{b}_{k2}$, and the impression quality as ${Q'}_{GC} =\Sigma_{ij} \lambda_jx_{ij}q_{ij}-\Sigma_{kj} \lambda_jx_{kj}q_{kj}$. 
	
	%	which is \textit{outcome} produced by the impression allocation strategy introduced in Section \ref{section:subproblem} on the rest of the impression set after time step $t$ with $\alpha_{j,t}$. In model training process, the reward could be normalized by the theoretical optimum \textit{outcome} which could be obtained via solving the \eqref{lp2} on the entire dataset.
	
	\item [$\mathcal{T}$:] We apply a model-free RL method to solve the impression allocation problem, so that the transition dynamics could not be explicitly modeled.
	
	\item [$\gamma$:] The reward discount factor $\gamma$ is set to 1 since the optimization goal of the impression allocation problem is to maximize the return regardless of time.
\end{itemize}

Usually, the optimal policies in a Markov game are difficult to learn due to some common challenges such as high model complexity, unstable return (\textit{i.e.}, $\Sigma_j\Sigma_t r_{j,t}$) caused by joint agents' updating in a sequential decision process. Fortunately, as for our problem, there is an important property that can simplify the policy searching process. Next, we first present this important property and then show the details of our method. 

\subsection{The Sub-problem in Impression Allocation} \label{section:subproblem}

In Theorem \ref{algo:prove}, we derive the optimal bidding function for the impression allocation problem. At each step, with contracts partially fulfilled, agents would face a sub-problem of impression allocation and are required to take actions based on the current state. We prove that the sub-problem can be formulated in the same form as \eqref{lp1} and the optimal action for each agent $j$ is shown by the following Theorem \ref{theorem2}.
%Therefore the key mission for MARL is to learn how to take actions for the current sub-problem. 
%However, in real-world applications, it is a non-trivial problem to obtain the optimal bidding parameter $\alpha^{*}$ from the historical impression set due to the unstable traffic pattern. 
%It is intuitive to adjust the $\alpha$ for each bidding agent according to the current situation which indicates the current $R_{RTB}$, $Q_{GC}$ and the impressions delivered so far. The strategy of $\alpha$ adjustment is to maximize the further \textit{outcome}. 
%However, since the impressions arriving sequentially in a day, we have to address the challenge of adjusting the $\alpha$ when the current state is not we expected, e.g. the contract impression is fulfilled too quickly so that the \textit{outcome} is obviously deviating from the optimal one, and this is very common in real-world application due to the unstable traffic patterns. We can consider this challenge as a sub-problem of optimal impression allocation.

\begin{theorem} \label{theorem2}
	For a sub-problem at each time step $t$, the optimal action sequence for each agent $j \in [1,..,m]$ is to modify its current $\alpha_{j,t}$ to the optimal ${\alpha}^*_{j,t}$, and keep it fixed for all its following time steps.
\end{theorem}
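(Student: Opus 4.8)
The plan is to reduce the sub-problem faced at any time step $t$ to an instance of \eqref{lp1} and then invoke Theorem~\ref{theorem} directly. First I would freeze the state reached at time $t$: writing $\mathcal{I}_{<t}$ for the impressions already processed and $\mathcal{I}_{\ge t}$ for those still to arrive, let $d_j^{(t)} = d_j - \sum_{i\in\mathcal{I}_{<t}} x_{ij}$ be the residual demand of contract $j$. The outcome already accrued before $t$ (the collected RTB revenue and contract quality) is a constant that does not affect the maximizer, so optimizing the total remaining outcome over $\mathcal{I}_{\ge t}$ is exactly
\begin{equation*}
  \underset{x_{ij},\,y_j}{\text{maximize}}\ \ -\sum_{i\in\mathcal{I}_{\ge t},\,j} x_{ij}\mathbbm{b}_{i2} - \sum_j y_j p_j + \sum_{i\in\mathcal{I}_{\ge t},\,j}\lambda_j x_{ij} q_{ij}
\end{equation*}
subject to $\sum_{i\in\mathcal{I}_{\ge t}} x_{ij} + y_j = d_j^{(t)}$ for every $j$, together with the same $\sum_j x_{ij}\le 1$, $x_{ij}\ge 0$, $y_j\ge 0$ constraints. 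Since the terminal under-delivery equals $d_j^{(t)}-\sum_{i\in\mathcal{I}_{\ge t}} x_{ij}$, the penalty term is captured correctly, and this residual program is precisely \eqref{lp1} with $d_j$ replaced by $d_j^{(t)}$ and the impression index restricted to $\mathcal{I}_{\ge t}$.

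I would then apply Theorem~\ref{theorem} to this residual program. Letting $\alpha^*_{j,t}$ denote its optimal dual variable, the theorem guarantees that the sub-problem optimum is attained by bidding $b_{ij} = \lambda_j q_{ij} + \alpha^*_{j,t}$ for every $i\in\mathcal{I}_{\ge t}$ and allocating by the threshold rule \eqref{allocate}. The decisive feature is that this optimal bid uses a single, impression-independent parameter $\alpha^*_{j,t}$ held constant across the whole residual horizon.

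Finally I would close the optimality argument. Any admissible action trajectory for agent $j$ from $t$ to $T$ induces a feasible point of the residual program and therefore cannot exceed its optimal value; conversely, committing to $\alpha_{j,t}=\alpha^*_{j,t}$ and then issuing the null action at every later step reproduces exactly the LP-optimal allocation of $\mathcal{I}_{\ge t}$, hence attains that value. The constant trajectory at $\alpha^*_{j,t}$ is thus optimal and no subsequent re-adjustment can improve it, which is the claim of Theorem~\ref{theorem2}.

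The step I expect to be the main obstacle is this last equivalence: one must check that applying the fixed $\alpha^*_{j,t}$ incrementally, batch by batch as impressions arrive, produces the same allocation as the offline fixed-$\alpha$ solution of the residual program. Because each bid $b_{ij}$ is monotone in $\alpha_j$ and the rule \eqref{allocate} is a per-impression threshold comparison against $\mathbbm{b}_{i2}$ that is independent of the order in which impressions are seen, the online batch-wise decisions agree with the offline ones. This identifies the online-achievable optimum with the LP optimum and closes the argument.
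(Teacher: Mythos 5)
Your proposal follows essentially the same route as the paper: freeze the state at time $t$, form the residual linear program with updated demands $d_j^{(t)}$ over the remaining impressions, observe it has the same form as \eqref{lp1}, and invoke Theorem~\ref{theorem} to conclude that bidding with the fixed dual optimum $\alpha^*_{j,t}$ for the rest of the horizon is optimal. Your closing argument (any action trajectory induces a feasible point of the residual LP, and the per-impression, order-independent threshold rule makes the online fixed-$\alpha$ execution coincide with the offline LP solution) is in fact somewhat more careful than the paper's own brief assertion of that final step.
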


\begin{proof}
    At any time step $t$, since contracts may have already won several impressions, the demand of contracts for the subsequent impressions are refreshed. Specifically, let $e_j$ be the impression that contract $j$ has won, $\mathcal{I}$ be the remaining impression set at time step $t$. For impression $i\in \mathcal{I}$, the remaining demand of contract $j$ can be updated as $d'_j = d_j - e_j$, the impression under-delivery amount of contract $j$ as $y'_j=d'_j-\Sigma_{ij} x_{ij}$. Then, the components of $outcome$ can be updated as: $R'_{GC} = \Sigma_j c_jd'_j - \Sigma_j p_jy'_j$, $R'_{RTB}=\Sigma_i (1-\Sigma_j x_{ij})\mathbbm{b}_{i2}$, and $Q'_{GC} =\Sigma_{ij} \lambda_jx_{ij}q_{ij}$. 
    Therefore, the sub-problem can be formulated as the following linear programming, which shares the same form of \eqref{lp1}:
    
	\begin{equation} \tag{LP2} \label{lp3}
		\begin{aligned}
			\underset{x_{ij},y'_j}{\text{maximize}}  & & {R'}_{GC} + {R'}_{RTB} + {Q'}_{GC} & & & \\
			\text{s.t.} & & \Sigma_{ij} x_{ij} + y_{j} = d'_j,\quad &&&\forall j,i \in \mathcal{I}, \nonumber\\
			& & \Sigma_j x_{ij} \le 1,\quad &&& \forall i\in \mathcal{I}, \nonumber\\
			& & x_{ij} \ge 0,\quad &&& \forall j,i\in \mathcal{I}, \nonumber\\
			& & y_{j} \ge 0,\quad &&& \forall j. \nonumber
		\end{aligned}    
	\end{equation}
	
	Similarly, the optimal bid can be derived as $b'_{ij} = \lambda_j\ q_{ij} + {\alpha}^{*}_{j,t}$, where ${\alpha}^{*}_{j,t}$ is the optimal parameter of sub-problem at time step $t$. Thus, for the sub-problem of impression allocation, the optimal action is to adjust the current parameter ${\alpha}_{j,t}$ of contract $j$ to ${\alpha}_{j,t}^*$. 
	After taking the optimal action, for any contract $j$, the $\alpha_{j,t}$ would be the optimal one. Thus, the following optimal actions would be keeping the parameters fixed until the end of the episode.
\end{proof}

\subsection{Multi-agent Reinforcement Learning to Impression Allocation (MARLIA)}

%\begin{figure*} 
%	\centering
%	\includegraphics[width=0.75\textwidth]{}
%	\caption{A feedback loop of our MARL method at time step $t$. Each agent $j$ shares a same model and provides different observation and context information to get action $\delta_{j,t}$ and new bidding parameter $\alpha_{j,t}$. Then impression allocation process is carried out according to Theorem \ref{theorem} over the remaining impressions. Finally, the impression allocation \textit{outcome} is as a feedback along with the observations, contexts, actions to update the model.}
%	\label{fg:feedbackLoop}
%\end{figure*}

Based on what has be introduced above, in this section, we present our policy searching method, named Multi-Agent Reinforcement Learning to Impression Allocation (MARLIA).
Firstly, we apply an actor-critic reinforcement learning (RL) model \cite{sutton2018reinforcement} as the implementation of our method for its simplicity\footnote{Without loss of generality, other actor-critic RL models also can be applied in our scenario.}.
Secondly, to enhance the scalability and reduce the model complexity, we let all agents share a same model and be differentiated through context information included in each agent observation (\textit{e.g.}, other contracts' fulfillment status).
Thirdly, also most importantly, based on Theorem \ref{theorem2}, the learning processes of actors and critics are simplified, which can be interpreted from the following two aspects: 

\begin{itemize}[align=right,leftmargin=0.15in]
	\item At time step $t$, the mission of each agent $j$ is to make a single optimal decision which adjusts current $\alpha_{j,t}$ to ${{\alpha}^*_{j,t}}$ and keep it fixed for the following time steps rather than sequentially adjusting it, which significantly reduces the learning difficulty.
	\item The learning process of the critic $Q$ is simplified as minimizing the difference between $v$ and $Q(o_{j,t}, \delta_{j,t})$ where $v$ is exactly the \textit{outcome} produced by the fixed parameters $\alpha_{k,t},\ k\in[1,..,m],$ over the remaining impression set. Compared with common practice of updating $Q(o_{j,t}, \delta_{j,t})$ towards $r_{j,t}+\gamma Q(o_{j,t+1}, \pi_j(o_{j,t+1}))$ (more details see temporal-difference method \cite{sutton2018reinforcement}), $v$ more clearly indicates whether the action $\delta_{j,t}$ would leads to a better final \textit{outcome}. This makes the learning process of $Q$ easier and, in turn, $Q$ will boost the policy convergence.
\end{itemize}

To be more concrete, MARLIA is presented in Algo. \ref{algo_fast}.

\begin{algorithm}
	\caption{MARLIA}\label{algo_fast}
	Initialize a random process $\mathcal{N}$ for action exploration\;
	Initialize replay memory $\mathcal{M}$ with capacity $N$\;
	Initialize policy $\pi_\theta$ with weights $\theta$\;
	Initialize state action value function $Q_\eta$ with weights $\eta$\;
	Set batch size to $\text{BS}$\;
	
	\While{not convergent}{
		
		Set $\alpha_{j,1}=\alpha^*_{j,1} + \mathcal{N}_1$\;
		Each agent $j$ bids with $\alpha_{j,1}$ via Eq. \eqref{eq:optimal_f} between time step $1$ and time step $2$;
		
		Calculate reward $r_{j,1}$\;
		
		\For{$t = 2$ \KwTo $T$}{
			
			\For{agent $j = 1$ \KwTo $m$}{
				Observe state $o_{j,t}$\;
				Get action $\delta_{j,t}=\pi_\theta(o_{j,t})$+$\mathcal{N}_t$ \;
				Set $\alpha_{j,t}=\min\{\alpha_{j,t-1}+\delta_{j,t}\cdot p_j, p_j\}$\;
			}
			
			Calculate reward $r_{j,t}$\;
			Set $v=r_{j,t}$\;
			
			\For{$t' = t+1$ \KwTo $T$}{
				Impression allocation with $\alpha_{j,t}$ via Eq. \eqref{eq:optimal_f}\;
				Calculate reward $r_{j,t'}$\;
				Set $v=v+r_{j,t'}$\;
			}
			
			Store $(o_{j,t}, \delta_{j,t}, v)$ in $\mathcal{M}$\;
			
			Sample $\text{BS}$ $(o^k, \delta^k, v^k)$ tuples from $\mathcal{M}$\;
			Update $Q$ by minimizing the loss $\mathcal{L}(\eta)=\frac{1}{\text{BS}}\Sigma_k(v^k-Q_\eta(o^k, \delta^k))^2$ \;
			
			Update policy using sampled policy gradient:
			
			\resizebox{.73\linewidth}{!}{$\nabla_{\theta}J\approx \dfrac{1}{\text{BS}} \sum_k \nabla_{\theta} \pi(o^k)\nabla_{\delta}Q^\pi(o^k, \delta)|_{\delta=\pi(o^k)}$}\;
			
			%				$\nabla_{\theta}J\approx \dfrac{1}{BN} \sum_k \nabla_{\theta} \pi(o^k, c^k)\nabla_{\delta}Q^\pi(o^k, c^k, \delta)|_{\delta=\pi(o^k, c^k)}$ \;
		}
		
	}
\end{algorithm}

\section{Experimental Evaluation}\label{section:exp}

In this section, we first introduce the experimental setup, then we compare MARLIA with existing methods and show its advantages in \textit{outcome} maximization, last we investigated the converging efficiency and scalability of MARLIA, which is critical in real-world industrial application. 

\subsection{Experimental Setup}
\subsubsection{Dataset}

The experiment datasets are from a large advertising platform. The datasets consists of two publishers, and for each publisher, the ad serving logs are provided on May 17th-19th and June 17th-19th in 2020, and the contract demands are provided on May 18th-19th and June 18th-19th in 2020. We make each adjacent days of data as a training and testing pair, \textit{i.e.}, the previous day of impression data is used for training under the demand of the latter day, and the latter day of impression data with its demand is used for testing. Therefore there are 4 training sets and 4 test sets for each publisher. In total, the data contains more than 36 millions impressions with the detailed information, including time, market price and predicted click-through rate\footnote{Without loss of generality, we consider the click-through rate as impression quality.} for each contract bidder. 
%For experiment reproducibility, we publicized our code and datasets\footnote{The experiment code is available at \url{https://github.com/iamccme/mapofl.git} and the datasets is available at \url{https://tianchi.aliyun.com/dataset/dataDetail?dataId=52424.}}.

\subsubsection{Evaluation Metrics}

The goal of impression allocation is to maximize the overall \textit{outcome}. Based on the optimal impression allocation formulated by \eqref{lp1}, the theoretically optimal \textit{outcome} on the testing dataset can be obtained, denoted by $R^*$. Let $R$ be the actual \textit{outcome} of the applied policy. The ratio between $R$ and $R^*$, \textit{i.e.}, $R/R^*$, is a simple and effective metric to evaluate the policy. For MARL algorithms, the convergence efficiency is also critical for practical effectiveness. It can be measured by converging time.

\subsubsection{Implementation Details}\label{imp_details}

The agents take action in every 15 minutes, so the $T$ in Algo. \ref{algo_fast} is $\num{96}$. We use two fully connected neural network, each of them with 2 hidden layers and 32 nodes per layer, to implement the policy $\pi_\theta$ and state action value function $Q_\eta$, respectively. The mini-batch size is set to $\num{32}$ and the replay memory size is set to $\num{100000}$. The action range is set to $[-0.1, 0.1]$ and the action noise is implemented by a normal distribution generator with $\mu=0$ and $\sigma=0.05$. 
We set the learning rate of actor and critic to $\num{1E-5}$ and $\num{1E-3}$ respectively. 
%We also use the \textit{prioritized experience replay} technique \cite{schaul2015prioritized} to boost the convergence of agents. 
In our case, the model selection criterion in training process is choosing the best $R/R^*$ model with initial $\alpha_{j}^*$s from the day before the testing date within 6 hours training time. 
For practical considerations, for each day, we randomly sample 10\% of the data for model training, and use  GNU Linear Programming Kit (GLPK) to solve the dual problem of \eqref{lp1} to estimate the $R^*$ and $\alpha_j^*$.
In the evaluation period, we use 100\% of the data for testing. All experiments are carried on a Macbook Pro with 2.6 GHz Intel Core i7 and 16 GB 2667 MHz DDR4.

\subsubsection{Compared Methods} 

\begin{enumerate} [align=right,leftmargin=0.15in]
	\item \textbf{Fixed Parameter (FP):} A method that fixes the $\alpha_j$ as the $\alpha_j^*$ from training data for each contract, and delivers bids according to Eq. \eqref{eq:optimal_f} in testing period.
	
	\item \textbf{MSVV:} A classical algorithm for online ad allocation \cite{mehta2005adwords}.
%	which strategically shades the bidder's bid according to the unspent budget, and finally allocates ad to the one with the highest bid. 
	We set the bid for each contract as $(p_j + {\lambda_j}q_{ij})\cdot (1 - e^{x_j - 1})$, where $x_j$ is the the fraction of the bidder's budget that has been spent so far, and the bid of RTB as $\mathbbm{b}_{i2}\cdot (1 - e^{-1})$.
	
	\item \textbf{PID Controller (PID):} A widely used technique in display advertising 
	% 	\cite{zhang2016feedback} 
	\cite{yang2019bid}
	to fulfill contracts by even pacing. We adopt this technique to modify $\alpha_j$ in Eq. \eqref{eq:optimal_f} to satisfy contracts' demand.
	
	%		\item \textbf{MADDPG:} Based on the optimal bidding function defined by Eq. \eqref{eq:optimal_f}, we apply the state-of-the-art MARL method \cite{lowe2017multi} to coordinate the agents with $r^{(t)}$ for impression allocation. 
	
	%		\item \textbf{MARLIA:} Based on the optimal bidding function defined by Eq. \eqref{eq:optimal_f}, we implement our \textit{Multi-Agent Policy Optimization with Fully Local Observation} method for impression allocation, in which agents only receive local observations and the new reward function \eqref{eq:r_def} is adopted. 
	
	\item \textbf{MARLIA:} The MARL to Impression Allocation method we proposed in this paper. 
\end{enumerate}

%	\subsubsection{Implementation Details}
%	We take a fully connected neural network with 2 hidden layers and 100 nodes for both actor and the reward function defined by Eq. \eqref{eq:r_def}. The mini-batch size is set to 32 and the replay memory size is set to 10,000. The action range is $[-0.1, 0.1]$ and the action noise is implemented by a normal distribution generator with $\mu=0$ and $\sigma=0.05$. Following the common practice of DDPG \cite{lillicrap2015continuous}, we set $\tau = 0.02$ to update target network parameters, and the learning rate of actor and critic is set to $\num{1E-3}$ and $\num{1E-4}$ respectively. Besides, we limit the training time to 120 hours for practical considerations. We also conduct experiments with different parameters. Larger action range and standard deviation of the noise would deteriorate the convergence efficiency due to larger exploration space, while tuning other parameters usually leads to similar results.

\begin{table*}
	\caption{The datasets statistics. The \textit{difference} means the daily relative change between adjacent days.}
	\centering
	\scalebox{1.0}{
		\setlength\tabcolsep{2pt}
		\begin{tabular}{@{}c|ccccc|ccccc@{}}
			\toprule
			& \multicolumn{5}{c|}{publisher1} & \multicolumn{5}{c}{publisher2} \\
			\midrule
			Dates & Impressions & Contracts & Demands & \tabincell{c}{Impression Volume\\Difference} & \tabincell{c}{Market Price\\Difference} & Impressions & Contracts & Demands & \tabincell{c}{Impression Volume\\Difference} & \tabincell{c}{Market Price\\Difference} \\
			\midrule
			05/17 & 4.34M & - & - & - & - & 2.07M & - & - & - & - \\
%			\hline
			05/18 & 4.20M & 45 & 1.62M & -3.5\% & 4.8\% & 2.01M & 31 & 1.00M & -2.8\% & 4.0\% \\
%			\hline
			05/19 & 4.05M & 49 & 1.72M & -3.5\% & 4.3\% & 1.95M & 30 & 1.01M & -3.4\% & 3.0\% \\
%			\hline
			06/17 & 4.14M & - & - & - & - & 2.07M & - & - & - & -  \\
%			\hline
			06/18 & 3.90M & 124 & 2.06M & -5.7\% & 4.9\% & 1.89M & 72 & 1.20M & -5.2\% & 4.4\% \\
%			\hline
			06/19 & 3.91M & 126 & 2.21M & 0.1\% & 5.0\% & 1.91M & 79 & 1.41M & 1.2\% & 5.4\% \\
			\bottomrule
		\end{tabular}
	}
\label{tb:ds_stat}
\end{table*}

\begin{table}
	\caption{The $R/R^*$ on 8 testing datasets of FP, MSVV, PID and MARLIA.}
	\centering
	\scalebox{1.0}{
		\setlength\tabcolsep{2pt}
		\begin{tabular}{@{}c|cccc|cccc@{}}
			\toprule
			& \multicolumn{4}{c|}{publisher1} & \multicolumn{4}{c}{publisher2} \\
			\midrule
			Dates & FP & MSVV & PID & MARLIA & FP & MSVV & PID & MARLIA \\
			\midrule
			05/18 & 0.851 & 0.866 & 0.938 & \textbf{0.945} & 0.846 & 0.883 & 0.932 & \textbf{0.954} \\
%			\hline
			05/19 & 0.909 & 0.849 & 0.923 & \textbf{0.949} & 0.928 & 0.878 & 0.929 & \textbf{0.953} \\
%			\hline
			06/18 & 0.902 & 0.863 & 0.893 & \textbf{0.951} & 0.870 & 0.874 & 0.885 &  \textbf{0.950}\\
%			\hline
			06/19 & 0.950 & 0.871 & 0.949 & \textbf{0.974} & 0.919 & 0.878 & 0.926 & \textbf{0.964}  \\
%			\hhline{=========}
			\hline
			Average & 0.903 & 0.862 & 0.926 & \textbf{0.955} & 0.891 & 0.878 & 0.918 & \textbf{0.955} \\
			\bottomrule
		\end{tabular}
	}
\label{tb:result}
\end{table}

\subsection{Evaluation Results}

\begin{figure}[htbp!]
	\centering
	\begin{subfigure}{0.38\textwidth}
		\includegraphics[width=\textwidth]{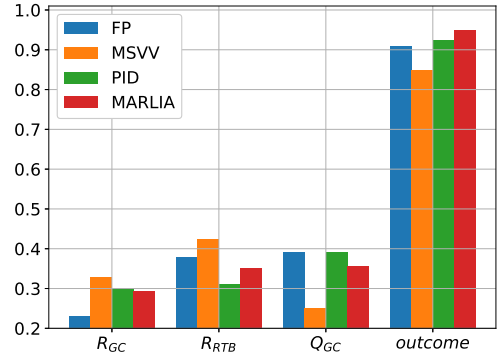}
		\caption{}
		\label{fg:methods_analysis}
	\end{subfigure}
	\hfill
	\begin{subfigure}{0.4\textwidth}
		\includegraphics[width=\textwidth]{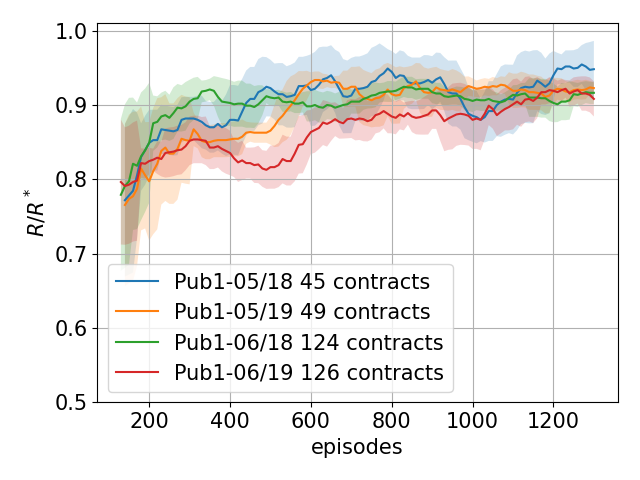}
		\caption{}
		\label{fg:convg_scala}
	\end{subfigure}
	\caption{
		\label{fig:NiceImage}%
		(a) The detailed results of all methods of publisher 1 on 2020/05/19, including $R_{GC}/R^*$, $R_{RTB}/R^*$, $Q_{GC}/R^*$ and $R/R^*$. (b) The training process of MARLIA on different datasets of Publisher 1.}
\end{figure}
% \subsubsection{Performance Comparison on 5 datasets}

%	\begin{table}
%		\caption{The daily averaged $R/R^*$ on 10 testing datasets (dates range from 2019/05/18 to 2019/05/22 for two publishers) of CF, MSVV, PID, MADDPG and MARLIA.}
%		\centering
%		\scalebox{0.75}{
%			\begin{tabular}{@{}l|ccccc@{}}
%				\toprule
%				& FP & MSVV & PID & MADDPG & MARLIA \\ \midrule
%				publisher 1 & 0.915 &   0.863   &  0.930   &   0.797  &    \textbf{0.950}     \\
%				\hline
%				publisher 2 & 0.917 &    0.884   &  0.934   &  0.807   &    \textbf{0.952}     \\ 
%				\bottomrule
%			\end{tabular}
%		}\label{tb:result}
%	\end{table}

%\begin{figure} 
%	\centering
%	\includegraphics[width=0.3\textwidth]{figures/bar.eps}
%	\caption{The detailed results of all methods of publisher 1 on 2020/05/19, including $R_{GC}/R^*$, $R_{RTB}/R^*$, $Q_{GC}/R^*$ and $R/R^*$.}
%	\label{fg:methods_analysis}
%\end{figure}

We conduct experiments to compare the performance of FP, MSVV, PID and MARLIA. The initial parameter $\alpha_{j,1}$ for Eq. \eqref{eq:optimal_f} is set as the optimal one of the training data. As stated in Section 1, due to the unstable incoming traffic, the impression volume and market price in the testing data deviate from that of training data. To present the performances of different methods under different traffic patterns, the statistical information is summarized in Table \ref{tb:ds_stat}, and the experimental results based on testing dataset are summarized in Table \ref{tb:result}. We can see that MARLIA outperforms all methods, and the averaged improvements over FP, MSVV and PID (for publisher 1 / for publisher 2) are
$\num{5.8}$\%/$\num{7.2}$\%, $\num{10.8}$\%/$\num{8.8}$\% and $\num{3.1}$\%/$\num{4.0}$\% respectively.

To further investigate the behaviors of all methods, we go deep into the detailed results of all methods of publisher 1 on 2020/05/19. As shown by Fig. \ref{fg:methods_analysis}, we illustrate the $R_{GC}/R^*$, $R_{RTB}/R^*$, $Q_{GC}/R^*$ and $R/R^*$ of all methods.

\begin{itemize} [align=right,leftmargin=0.1in]
	\item \textbf{FP:} FP is a simple strategy and may achieve good results when  the environments between training and testing datasets is similar. However, when the bidding competition becomes fiercer or the impression volume has a shortage risk, FP would still bid with the fixed $\alpha_{j,t}$ at any time step $t$, which usually makes the bid too low to win sufficient impressions and results in large under-delivery penalty. It can be observed in Fig. \ref{fg:methods_analysis}, given the impression difference is -3.5\% and the market price difference is 4.3\%, although the $R_{RTB}$ and $Q_{GC}$ is slightly larger than MARLIA, the severe shortage of $R_{GC}$ affects the final \textit{outcome}.
	
	\item \textbf{MSVV:} MSVV takes the RTB revenue into consideration by trading off between the \textit{outcome} from contracts and that from RTB for each impression arrival. However, since it is unaware of the subsequent impression distribution, it may allocate impressions to contract in early phase of the whole allocation process, resulting in great loss of future $Q_{GC}$. Different from MSVV, MARLIA learns from the training data to make better decisions to maximize the overall \textit{outcome}. It can be observed in Fig. \ref{fg:methods_analysis}, due to such defect, MSVV gains low $Q_{GC}$ compared with MARLIA and finally results in low \textit{outcome}.
	
	\item \textbf{PID:} PID tries to adjust $\alpha_{j,t}$ to pace impression acquisition, however, it has two drawbacks in our application. First, it is critical for a PID strategy to obtain a proper target, which is the target impression volume to be allocated to the contract at each step. In practice, it is tricky and needs expert knowledge to be continuously optimized. Second, different from MARLIA, PID strategy is usually designed for an individual agent. It is hard to consider global impression allocation between different agents or balance RTB revenue and contract fulfillment. As shown by Fig. \ref{fg:methods_analysis}, PID tries to fulfill every contract and results in relatively low $R_{RTB}$, which affects its final \textit{outcome}.
	
%	\item \textbf{MADDPG:} MADDPG could deliver more satisfying results than that of in Fig. \ref{fg:methods_analysis}. The reason why it delivers poor results is twofold: first, the immediate reward may cause severe non-stationary problems due to other agents' behaviors and affect the convergence efficiency; second, the scalability is limited by the DACC architecture when agent number increases (see Section \ref{scalability}).
	
	\item \textbf{MARLIA:} Compared with the above methods, MARLIA learns an $\alpha_{j,t}$ adjustment policy based on the current state to maximize the future \textit{outcome}. It can be found in Fig. \ref{fg:methods_analysis} that MARLIA does a good job in balancing different part of \textit{outcome}.
	
\end{itemize}

%\begin{figure}
%	\centering		
%	\subfigure[]{
%		\begin{minipage}[]{0.5\linewidth}\label{fg:methods_analysis}
%			\centering
%			\includegraphics[width=1.5in]{figures/bar.eps}
%		\end{minipage}%
%	}%
%%	\\
%	\subfigure[]{
%		\begin{minipage}[]{0.5\linewidth}\label{fg:converging}
%			\centering
%			\includegraphics[width=1.65in]{figures/converging.png}
%		\end{minipage}%
%	}
%	
%	\centering
%	\caption{(a) Training curves of MARLIA and MADDPG. (b) Average time consumption per episode of MADDPG and MARLIA over different contract amounts.}
%	\label{fg:convg_scala}
%\end{figure}

\subsection{Convergence Efficiency and Scalability}\label{scalability}

%\begin{figure} 
%	\centering
%	\includegraphics[width=0.3\textwidth]{figures/converging.png}
%	\caption{The training process of MARLIA on different datasets of Publisher 1. The x-axis is the training episodes and y-axis is the acquired $R$ over the optimal $R^*$ with an initial optimal $\alpha_{j,1}$ from the day before the training date.}
%	\label{fg:converging}
%\end{figure}

In application, FP, MSVV and PID have little training cost, while complicated methods such as MARL ones commonly suffered from convergence efficiency and scalability challenges. Besides, for an impression allocation task, it is of great necessity to accomplish the training process in an acceptable period of time. 
In order to show the converging efficiency and  scalability of MARL, the training process of MARLIA on the datasets of publisher 1 are illustrated by Fig. \ref{fg:convg_scala}. It can be seen that MARLIA will converge to a satisfying $R/R^*$ (e.g. greater than $\num{0.9}$) in $\num{1200}$ episodes ($\simeq\num{6}$ hours on a laptop) regardless of the contract number. Thus the proposed MARLIA is scalable and applicable in industrial scenario.

\section{Related Work}\label{relatedwork}
    Impression allocation for \textit{outcome} maximization is one of the most important issues to monetize traffic for a publisher. Some algorithms have been proposed to help publishers allocate impressions to contracts without consideration of RTB \cite{bharadwaj2010pricing,bharadwaj2012shale,zhang2017}. \cite{ghosh2009bidding} is the first work considering contracts as bidders to compete with RTB, but the goal is to maximize the representativeness of contract impressions. 
\cite{chen2014dynamic} proposes a revenue maximization strategy based on allocating and pricing the future contract impressions, which does not involve RTB and the efficiency is not optimized. 
To maximize the \textit{outcome}, \cite{balseiro2014yield} tries to learn a stochastic policy to deliver reserve price for each impression. However, for a repeated auction model, in any equilibrium, reserve price will make bidders in an ad exchange tend to shade their bids (not incentive compatible) \cite{carare2012reserve}. 
Besides, the \textit{contract first} strategy applied in \cite{balseiro2014yield} has also been proved not optimal in our experimental evaluations. 
Recently, a new strategy maximizing the total revenue is proposed in \cite{jauvion2018optimal}. However, the challenge of unstable traffic patterns is not discussed.

%Reinforcement learning (RL) has been applied to solve a wide range of problems. Specifically in the computational advertising domain \cite{yuan2012sequential,amin2012budget,zhao2019deep}, but how to optimize impression allocation via bidding strategy is not discussed. 

% MARLIA is specifically designed for impression allocation problem between contracts and RTB, which is a complement to works about RL application in computational advertising. 
%\cite{yuan2012sequential,amin2012budget,zhao2019deep}. 
% In our scenario, MARLIA relieves MARL common challenges such as computational complexity \cite{de2012polynomial} and scalability \cite{lowe2017multi}.

Impression allocation for outcome maximization is one of the most important issues to monetize traffic for a publisher. Some algorithms have been proposed to help publishers allocate impressions to contracts without consideration of RTB\cite{bharadwaj2010pricing,bharadwaj2012shale,zhang2017}. As RTB becomes increasingly important, how to maximize the profit considering both RTB and guaranteed contracts starts to be an open question. \cite{ghosh2009bidding} was the first work considering contracts as bidders to compete with RTB: the publisher is regarded as a bidder and the impression would be allocated to the bidder with the highest bidd price. However, the goal of \cite{ghosh2009bidding} was to maximize the representativeness of contract impressions from the advertisers' perspective. \cite{chen2014dynamic} proposed a revenue maximization strategy based on allocating and pricing the future contract impressions, but the allocation was determined in advance rather than through RTB. To maximize the outcome, \cite{balseiro2014yield} tries to learn a stochastic policy to deliver reserve price for each impression. However, it is known that, for a repeated auction model, in any equilibrium, reserve price will make bidders in an ad exchange tend to shade their bids (not incentive compatible) \cite{carare2012reserve}. Besides, the \textit{contract first} strategy applied in \cite{balseiro2014yield} has also been proved not optimal in our experimental evaluations. Recently, a new strategy maximizing the total revenue was proposed in \cite{jauvion2018optimal}. However, the challenge of unstable traffic patterns was not discussed or addressed.

Reinforcement learning (RL) has been applied to solve a wide range of problems. Specifically in the computational advertising domain \cite{yuan2012sequential, amin2012budget, jin2018real}, RL has been leveraged to optimize bidding strategies \cite{cai2017real, wu2018budget}. Traditional reinforcement learning approaches achieved a huge success in single agent settings, but delivered poor performance in MARL \cite{matignon2012independent}. One issue is that simply considering other agents as a part of the environment often breaks the convergence guarantee and makes the learning process unstable \cite{colby2015counterfactual}. Nash equilibrium algorithms have been proposed to address this problem \cite{de2012polynomial}. However, the computational complexity of directly solving Nash equilibrium confines such algorithms within handful agents and prohibits them from real-world applications. Following this direction, \cite{yang2018mean} tries to improve the scalability via leveraging action information from neighboring agents, while the concept of neighborhood is hard to be defined in the advertising context. 
% Our approach follows the paradigm of centralized training with decentralized execution \cite{lowe2017multi, peng2017multiagent, foerster2017counterfactual}, which allows the policies to use extra information to ease training and act in a decentralized manner. \cite{lowe2017multi,foerster2017counterfactual} are similar works and both are applicable to our problem.

\section{Conclusion}\label{conclusion}

In online display advertising, guaranteed contracts and real-time bidding (RTB) are two major ways to sell impressions for a publisher, especially for those large publishers with in-house RTB. In this paper, we proposed a strategy to maximize the \textit{outcome} of a publisher by allocating impressions between guaranteed contracts and RTB. We proposed the \textit{outcome} maximization strategy by deriving the optimal bidding function when contracts are treated as bidders. Meanwhile, the impression allocation strategy can also keep the incentive compatibility of RTB, making truthful bidding still the dominant strategy. In order to implement the strategy with the practical challenges such as unstable traffic patterns, we proposed an efficient MARL method, MARLIA, to adjust the critical parameter $\alpha_j$ for contract $j$. Meanwhile, based on the important property of the impression allocation problem, the learning efficiency is significantly boosted, and making the deploy of our method in industrial scenario is feasible. Experimental evaluations on large-scale real-world datasets demonstrate that MARLIA outperforms other baselines in \textit{outcome} maximization. Meanwhile, empirical studies also show that MARLIA converges quickly regardless of the contract amount, which is important in industrial application.

\bibliographystyle{IEEEtran}
\bibliography{IEEEfull}

% \begin{thebibliography}{1}

% \bibitem{IEEEhowto:kopka}
% H.~Kopka and P.~W. Daly, \emph{A Guide to \LaTeX}, 3rd~ed.\hskip 1em plus
%   0.5em minus 0.4em\relax Harlow, England: Addison-Wesley, 1999.

% \end{thebibliography}

% that's all folks
\end{document}